\title{\LARGE\bf Decentralized Coherent Quantum Control Design for Translation Invariant Linear Quantum Stochastic Networks with Direct Coupling} 
\author{Arash Kh. Sichani, \qquad Igor G. Vladimirov, \qquad Ian R. Petersen
\thanks{This work is supported by the Australian Research Council. The authors are with UNSW Canberra, ACT 2600, Australia. {\tt arash\_kho@hotmail.com, igor.g.vladimirov@gmail.com, i.r.petersen@gmail.com}.}
}
\newtheorem{lem}{Lemma}
\newtheorem{thm}{Theorem}
\def\<{\leqslant}           
\def\>{\geqslant}           
\def\d{\partial}
\def\wh{\widehat}
\def\wt{\widetilde}
\def\mod{\mathrm{mod}}   
\def\Res{\mathop{\mathrm{Res}}} 
\def\Re{\mathrm{Re}}   
\def\cH{\mathcal{H}}   
\def\mA{\mathbb{A}}    
\def\mR{\mathbb{R}}    
\def\mC{\mathbb{C}}    
\def\Tr{\mathrm{Tr}}       
\def\rT{\mathrm{T}}        
\def\bE{\mathbf{E}}    
\def\[[[{[\![\![}
\def\]]]{]\!]\!]}
\def\bra{\langle }
\def\ket{\rangle }
\def\re{\mathrm{e}}        
\def\rd{\mathrm{d}}        
\def\sJ{\mathsf{J}}
\def\br{\mathbf{r}}
\def\x{\times}
\def\ox{\otimes}
\def\mG{{\mathbb G}}
\def\cW{\mathcal{W}}
\def\cX{\mathcal{X}}
\def\cQ{{\mathcal Q}}
\def\cA{\mathcal{A}}
\def\cB{\mathcal{B}}
\def\cE{\mathcal{E}}
\def\cS{\mathcal{S}}
\def\mU{\mathbb{U}}
\def\mH{{\mathbb H}}
\def\mS{{\mathbb S}}
\DeclareMathAlphabet{\bit}{OML}{cmm}{b}{it}
\begin{document}
\maketitle
\thispagestyle{empty}

\begin{abstract}
This paper is concerned with coherent quantum control design for translation invariant networks of identical quantum stochastic systems subjected to external quantum noise. The network is modelled as an open quantum harmonic oscillator and is governed by a set of linear quantum stochastic differential equations. The dynamic variables of this quantum plant satisfy the canonical commutation relations. Similar large-scale systems can be found, for example, in quantum metamaterials and optical lattices. The problem under consideration is to design a stabilizing decentralized coherent quantum controller in the form of  another  translation invariant quantum system, directly coupled to the plant, so as to minimize a weighted mean square functional of the dynamic variables of the interconnected networks. We consider this problem in the thermodynamic limit of infinite network size and present first-order necessary conditions for optimality of the controller.
\end{abstract}
\section{INTRODUCTION}
Currently emerging technologies open up opportunities to synthesize artificial optical media known as quantum metamaterials; see, for example, \cite{Rakhmanov2008, Notomi2008, Zheludev11, Quach11, Zagoskin12}. These large-scale quantum networks with are engineered from complex unit cells and are effectively homogeneous (in the sense of translational symmetries) on the scale of relevant wavelengths, for example, in the microwave range. In natural solids, the quantum energy level configurations of the constituent atoms or molecules specifies the optical behaviour of the material. 

In contrast, the controllable resonant characteristics of the building elements of quantum metamaterials, such as the Josephson devices or optical cavities \cite{Rakhmanov2008, Notomi2008,Quach11}, determine the electromagnetic response of the quantum metamaterials. The quantum metamaterials are considered to be a promising approach to the implementation of quantum computer elements which can maintain quantum coherence over many cycles of their internal evolution \cite{Quach11, Zagoskin11}.

A framework for modelling and analysis of  a wide range of open quantum systems, including those arising in quantum metamaterials, is provided by
quantum stochastic differential equations (QSDEs) \cite{HP_1984,P_1992}. In QSDEs, the environment is modelled as a heat bath of external fields acting on a boson Fock space \cite{P_1992}. In particular, linear QSDEs represent the Heisenberg evolution of pairs of conjugate operators in a multi-mode open quantum harmonic oscillator which is coupled to the external bosonic fields. This framework also allows for robust  stability analysis for certain classes of perturbed open quantum systems which have been addressed, for example, in \cite{Ian2012robust,SVP_2015}.

The analysis of large-scale  quantum networks can be effectively reduced in the case when they are organized as a translation invariant interconnection of identical elements  with periodic boundary conditions (PBCs); see, for example, \cite{Quach11, VP2014, SVP_TI_2015}. The PBCs rely on negligibility of boundary effects in such a network consisting of a sufficiently large number of subsystems. This technique is used for lattice models of interacting particle systems in statistical physics (for example, in the Ising model of ferromagnetism \cite{Newman99}).

Coherent quantum feedback control is aimed at achieving robust stability and robust performance through measurement-free interconnection of quantum systems \cite{J2010diss,SVP_P_2015}. In this approach, the controller is another quantum system which is coupled to the quantum plant, for example, through a bilateral energy interconnection, known as direct coupling \cite{Zhang11}. In comparison with the more traditional measurement-based feedback control techniques, coherent control benefits from the preservation of quantum coherence within the network.

In this paper, both the quantum plant and the coherent quantum controller are modelled as large fragments of translation invariant networks of linear quantum stochastic systems endowed with the PBCs. The nodes of the networks are directly coupled to each other within a finite interaction range. This interconnection is governed by linear QSDEs based on the Hamiltonian and coupling parametrization of the corresponding multi-mode open quantum harmonic oscillator whose dynamic variables satisfy the canonical commutation relations (CCRs).
Following a similar approach used in the classical control theory, we employ spatial Fourier transforms in order to obtain a more tractable representation of the dynamics of the quantum feedback network; see, for example, \cite{Wall1978,Bamieh_2012, VP2014, SVP_TI_2015} and the references therein.

We consider a weighted mean square performance index for the stable plant-controller network in the thermodynamic limit (when the network size goes to infinity). In this framework, the decentralized control design problem is formulated as the minimization of the cost functional over the parameters of the controller and its coupling with the plant. By calculating the Fr{\' e}chet derivatives of the cost functional, we obtain first-order necessary conditions for optimality of the controller. In comparison with previous results on coherent quantum LQG control \cite{VP_2013a} using Gramians and related algebraic Lyapunov equations (ALEs), the optimality  conditions developed  in the present paper employ spatial spectral densities, which reflects the translation invariant nature of the underlying problem.

\section{NOTATION}\label{sec:not}

Unless specified otherwise,  vectors are organized as columns, and the transpose $(\cdot)^{\rT}$ acts on matrices with operator-valued entries as if the latter were scalars. For a vector $X$ of operators $X_1, \ldots, X_r$ and a vector $Y$ of operators $Y_1, \ldots, Y_s$, the commutator matrix is defined as an $(r\x s)$-matrix
$
    [X,Y^{\rT}]
    :=
    XY^{\rT} - (YX^{\rT})^{\rT}
$
whose $(j,k)$th entry is the commutator
$
    [X_j,Y_k]
    :=
    X_jY_k - Y_kX_j
$ of the operators $X_j$ and $Y_k$.
Also, $(\cdot)^{\dagger}:= ((\cdot)^{\#})^{\rT}$ denotes the transpose of the entry-wise operator adjoint $(\cdot)^{\#}$. In application to complex matrices,  $(\cdot)^{\dagger}$ reduces to the complex conjugate transpose  $(\cdot)^*:= (\overline{(\cdot)})^{\rT}$. Furthermore, $\mS_r$ and $\mA_r$
denote
the subspaces of real symmetric and real antisymmetric matrices of order $r$, respectively and $\mA_2$ is spanned by the matrix
$\sJ:={\scriptsize
		\begin{bmatrix}
			0 & 1\\ -1 & 0
		\end{bmatrix}
}$. Also, $i:= \sqrt{-1}$ denotes the imaginary unit, $I_r$ is the identity matrix of order $r$,
positive (semi-) definiteness of matrices is denoted by ($\succcurlyeq$) $\succ$, and $\ox$ is the tensor product of spaces or operators (for example, the Kronecker product of matrices). The spectral radius of a matrix $M$ is denoted by  $\br(M)$. The adjoints and  self-adjointness of linear operators acting on matrices is understood in the sense of the Frobenius inner product
$
    \bra M,N\ket
    :=
    \Tr(M^*N)
$ of real or complex matrices. 
The Kronecker delta is denoted by $\delta_{jk}$, and $\mU:= \{z\in \mC:\ |z|=1\}$ is the unit circle in the complex plain. The complex residue of a function $f$ about a point $z_0$ is denoted by $\Res_{z=z_0}f(z)$. Also, $\bE \xi := \Tr(\rho \xi)$ denotes the quantum expectation of a quantum variable $\xi$ (or a matrix of such variables) over a density operator $\rho$ which specifies the underlying quantum state. For matrices of quantum variables, the expectation is evaluated entry-wise.

\section{LINEAR QUANTUM STOCHASTIC SYSTEMS}\label{sec:system}
We consider a quantum stochastic system interacting with external boson fields \cite{HP_1984,P_1992}.  The system has $N$ subsystems with associated $2n$-dimensional vectors $X_0, \ldots, X_{N-1}$ of dynamic variables which satisfy the CCRs
\vspace{-2mm}\begin{equation}
\label{xCCR}
    [X, X^{\rT}] = 2i \bit{\Theta},
    \qquad
    \bit{\Theta}:= I_N \ox \Theta,
    \qquad
    X:=
    {\small\begin{bmatrix}
        X_0\\
        \vdots\\
        X_{N-1}
    \end{bmatrix}}.
\end{equation}
Here, $\bit{\Theta}$ is a block diagonal joint CCR matrix, where $\Theta \in \mA_{2n}$ is a nonsingular matrix.
The system variables evolve in time according to the QSDE
\vspace{-2mm}\begin{equation}
\label{dx}
  	\!\rd X
    \!\!=\!\!
    \Big(\!
      i[H ,X]
      \!-
      \!\frac{1}{2}
       \bit{B}\bit{J} \bit{B}^{\rT} \bit{\Theta}^{-1} X
    \Big)\!\rd t
    \!+\!\!
    \bit{B} \rd W,
    \qquad
    W
    :=
    {\scriptsize
    \begin{bmatrix}
           W_0\\
       \vdots \\
    	W_{N-1}
    \end{bmatrix}
    }.
\end{equation}
Here, $W_0, \ldots, W_{N-1}$ are $2m$-dimensional vectors of quantum Wiener processes with a  positive semi-definite It\^{o} matrix $\Omega \in \mH_{2m}$:
\begin{equation}
\label{WW}
    \rd W_j \rd W_k^{\rT}
    =
    \delta_{jk}\Omega \rd t,
    \qquad
    \Omega := I_{2m} + iJ,
    \qquad
    J := I_m \ox \sJ.
\end{equation}
Accordingly, $\bit{J}:= I_N \ox J$ is a block diagonal matrix,  and the matrix $\bit{B} \in \mR^{2nN\x 2mN}$ in (\ref{dx}) is related to a matrix $\bit{M} \in \mR^{2mN \x 2nN}$ of linear dependence of the system-field coupling operators on the system variables by
\begin{equation}
	\label{BM}
    \bit{B} := 2\bit{\Theta} \bit{M}^{\rT}.
\end{equation}
The term $-\frac{1}{2} \bit{B} \bit{J} \bit{B}^{\rT} \bit{\Theta}^{-1} X$ in the drift of the QSDE (\ref{dx}) is associated with the system-field interaction and results from evaluating the Gorini-Kossakowski-Sudarshan-Lindblad decoherence superoperator \cite{GKS_1976,L_1976} at the system variables. Also, $H$ is the Hamiltonian which describes the self-energy of the system and is usually represented as a function of the system variables.
In the case of an open quantum harmonic oscillator \cite{EB_2005,GZ_2004}, the Hamiltonian $H$ is a quadratic function of the system variables
\begin{equation}
\label{H0}
    H
    :=
    \frac{1}{2}
    X^{\rT} R X
    =
    \frac{1}{2}
    \sum_{j,k=0}^{N-1}
    X_j^\rT R_{jk} X_k,
\end{equation}
where the energy matrix $R:= (R_{jk})_{0\< j,k< N} \in \mS_{2nN}$ is formed from blocks $R_{jk}= R_{kj}^{\rT} \in \mR^{2n\x 2n}$.
By substituting (\ref{H0}) into (\ref{dx}) and using the CCRs (\ref{xCCR}), it follows that the QSDE takes the form of a linear QSDE
\begin{equation}
\label{dx1}
  \rd X = A X\rd t+ \bit{B} \rd W,
\end{equation}
where the system matrix $A \in \mR^{2nN\x 2nN}$ is given by
\begin{equation}\label{A}
    A
    :=
    2\bit{\Theta} R
    - \frac{1}{2} \bit{B}\bit{J} \bit{B}^{\rT} \bit{\Theta}^{-1} .
\end{equation}
We consider a $2qN$-dimensional  vector $Y$ of $q$-dimensional output fields $Y_0, \ldots, Y_{N-1}$ which is evolved by the linear QSDE
\begin{equation}
	\label{QSDEout}
	\rd Y = \bit{C} X \rd t+ \bit{D} \rd W,
	\qquad
    Y:=
    {\small\begin{bmatrix}
        Y_0\\
        \vdots\\
        Y_{N-1}
    \end{bmatrix}},
\end{equation}
with $\bit{C} \in \mR^{2qN \times 2nN}$ and $\bit{D} \in \mR^{2qN \times 2mN}$,
and satisfies the non-demolition condition \cite{B_1989} with respect to the dynamic variables in the sense that
\begin{equation}
	\label{Nondem}
	[X(t),Y(s)^{\rT}]=0,
    \qquad
    t\> s.
\end{equation}
Due to this non-demolition nature, the output fields can be interpreted as ideal measurements over the open quantum system. The condition (\ref{Nondem}) implies an algebraic relation between $\bit{C}$ and $\bit{D}$ in (\ref{QSDEout}):
\begin{align}
	\label{rCDB}
	\bit{\Theta}\bit{C}^\rT + \bit{B} \bit{J} \bit{D}^\rT=0.
\end{align}
Furthermore, the matrix $\bit{D} \bit{J} \bit{D}^\rT$ is an appropriate  submatrix of $\bit{J}$ from (\ref{WW}) (so that $q\< m$).

\section{LINEAR QUANTUM STOCHASTIC NETWORK WITH PERIODIC BOUNDARY CONDITIONS}\label{sec:Ring_Topology}
Suppose the open quantum harmonic oscillator of the previous section represents a fragment of a translation invariant network which is organised as a one-dimensional chain of identical linear quantum stochastic  systems numbered by $k=0,\ldots, N-1$. Each node of the network interacts with the corresponding external boson field, and hence, the joint network-field coupling matrix $\bit{M}$ in (\ref{BM}) and the feedthrough matrix $\bit{D}$ in (\ref{QSDEout}) are block diagonal:
\begin{align}
	\label{MMDD}
    \bit{M} := I_N \ox M,
	\qquad \bit{D} = I_N \ox D,
\end{align}
where $M \in \mR^{2m\x 2n}$ and $D \in \mR^{2q \times 2m}$. Hence, the matrix $\bit{B}$ is block diagonal and so also is the matrix $\bit{C}$ in view of (\ref{rCDB}), (\ref{MMDD}) and nonsingularity of the CCR matrix $\Theta$ in (\ref{xCCR}):
\begin{equation}
\label{BC}
  \!\!\bit{B} = I_N \ox B,
  \quad
  \bit{C} = I_N \ox C,
  \quad
  B := 2\Theta M^{\rT},
  \quad
  C := 2DJ M.\!\!\!\!\!
\end{equation}
The nodes in the network are directly coupled to each other within a finite interaction range $d$. The fragment of the chain is assumed to be  large enough in the sense that $N > 2d$, and is  endowed with the PBCs, thus having a ring topology. A particular case of nearest neighbour interaction (when $d=1$) is depicted in Fig.~\ref{fig:Ring_Topology}.
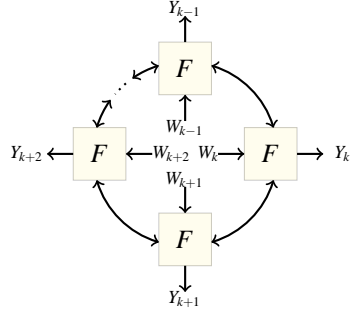
\begin{figure}[htp]
	\begin{center}
	\begin{tikzpicture}[scale=.35]
		\node at (0,1) {\scriptsize $W_{k-1}$};
		\node at (0.90,0) {\scriptsize $W_k$};
		\node at (0,-1) {\scriptsize $W_{k+1}$};
		\node at (-.5,0) {\scriptsize $W_{k+2}$};
		\draw [->, thick] (1.25,0) -- (2.25,0);
		\draw [->, thick] (-1.25,0) -- (-2.25,0);
		\draw [->, thick] (0,1.25) -- (0,2.25);
		\draw [->, thick] (0,-1.25) -- (0,-2.25);		
		\draw [->, thick] (4.25,0) -- (5.25,0);
		\draw [->, thick] (-4.25,0) -- (-5.25,0);
		\draw [->, thick] (0,4.25) -- (0,5.25);
		\draw [->, thick] (0,-4.25) -- (0,-5.25);
		\node at (0,5.5) {\scriptsize $Y_{k-1}$};
		\node at (-6,0) {\scriptsize $Y_{k+2}$};
		\node at (0,-5.5) {\scriptsize $Y_{k+1}$};
		\node at (6,0) {\scriptsize $Y_k$};		
		\draw [fill=yellow!40, opacity=.2] (2.25,1) rectangle (4.25,-1);
		\draw [fill=yellow!40, opacity=.2] (-1,4.25) rectangle (1,2.25);
		\draw [fill=yellow!40, opacity=.2] (-2.25,-1) rectangle (-4.25,1);
		\draw [fill=yellow!40, opacity=.2] (1,-2.25) rectangle (-1,-4.25);
		\node at (3.25,0) {$F$};
		\node at (0,3.25) {$F$};
		\node at (-3.25,0) {$F$};
		\node at (0,-3.25) {$F$};
		
		\draw [<->, thick] ([shift=(16.5:3.5)] 0,0) arc (16.5:73.5:3.5); 
		\draw [<->, thick] ([shift=(106.5:3.5)] 0,0) arc (106.5:125:3.5); 
		\draw [dotted, thick] ([shift=(130:3.5)] 0,0) arc (130.5:140:3.5);
		\draw [<->, thick] ([shift=(144:3.5)] 0,0) arc (144:163.5:3.5);
		\draw [<->, thick]([shift=(196.5:3.5)] 0,0) arc (196.5:253.5:3.5); 
		\draw [<->, thick]([shift=(-16.5:3.5)] 0,0) arc (-16.5:-73.5:3.5); 
	\end{tikzpicture}
	\end{center}\vskip-5mm
\caption{A finite fragment of the translation invariant network of open quantum systems with direct coupling between the nearest neighbours. Also shown are the input and output fields of the nodes of the network.}
\label{fig:Ring_Topology}
\end{figure}
In the case of an arbitrary interaction range $d\>1$,  the Hamiltonian $H$ in (\ref{H0}) is completely specified by matrices $R_{\ell} = R_{-\ell}^{\rT} \in \mR^{2n\x 2n}$, with $\ell = 0, \pm 1, \ldots, \pm d$ (so that $R_0 \in \mS_{2n}$) as
\begin{equation}
\label{HR}
    H
    =
    \frac{1}{2} \sum_{j=0}^{N-1}
    \Big(
    	X_j^{\rT}\sum_{\ell = -d}^d R_{\ell} X_{\mod(j-\ell,  N)}
    \Big).
\end{equation}
Here, $j-\ell$ is computed modulo $N$  in accordance with the PBCs,\footnote{with the MATLAB function being used instead of the standard modular arithmetic notation in order to avoid confusion in the subscripts} and hence, the corresponding matrix $R$ is block circulant. The matrix $R_0 \in \mS_{2n}$ specifies the free Hamiltonian for each node, while $R_{\pm s}$  describe the energy coupling between the nodes which are at a distance $s=1, \ldots, d$ from each other (with the more distant nodes in the network not being directly coupled).
In view of the block diagonal structure of the matrices $\bit{\Theta}$ and $\bit{M}$ in (\ref{xCCR}) and (\ref{MMDD}), it follows from  (\ref{BM}), (\ref{A}) and (\ref{HR}) that the QSDE (\ref{dx1}) is representable as a set of coupled QSDEs for the dynamic variables of the nodes of the network:
\begin{align}
	\label{dXj}
  		\!\!\rd X_j &= \sum_{\ell =-d}^d A_{\ell} X_{\mod(j-\ell,N)}\rd t + B \rd W_j,
  		\quad
   		 j = 0,\ldots, N-1,\!\!\!
\end{align}
where use is made of (\ref{BC}), and the matrices $A_{-d}, \ldots, A_d\in \mR^{n\x n}$ are given by
	\begin{align}
		\label{AA}
		    A_{\ell}
		    & :=
		    \left\{
		        \begin{array}{ll}
		            2 \Theta R_0 - \frac{1}{2} B J B^\rT \Theta^{-1} & {\rm if}\ \ell = 0\\
		            2 \Theta R_{\ell} & {\rm if}\ 0<|\ell| \< d
		        \end{array}
		    \right..
\end{align}
Furthermore, the output fields $Y_0, \ldots, Y_{N-1}$ in (\ref{QSDEout}), associated with the nodes of the network, evolve according to the QSDEs
\begin{align}
	\label{dYj}
  		\rd Y_j &= C X_{j}\rd t + D \rd W_j,
    \qquad
    j = 0,\ldots, N-1.
\end{align}
where $DJD^\rT$ is a submatrix of $J$ in view of the block diagonal structure of $\bit{J}$ and $\bit{D}$.
Therefore, the network, governed by (\ref{dXj}) and (\ref{dYj}), has a block circulant structure.
\section{INTERCONNECTED PLANT AND CONTROLLER NETWORKS}
\label{sec:Coherent_Quantum_Control_Arch}
Consider an interconnection of two translation invariant networks, similar to the one described in Section~\ref{sec:Ring_Topology}, where one of the networks is a quantum plant while the other is interpreted as a decentralized coherent quantum controller. We consider the case when the plant and controller networks are of equal size $N$. For example, Fig.~\ref{fig:Filter_Interconnection} provides a layout of the plant-controller network with nearest neighbour direct coupling.
\begin{figure}[htp]
	\begin{center}
\newcommand{\myGlobalTransformationI}[2]
{
    \pgftransformcm{1}{0}{0.3}{0.35}{\pgfpoint{#1cm}{#2cm}}
}
\newcommand{\myGlobalTransformationII}[2]
{
    \pgftransformcm{1}{0}{0}{.5}{\pgfpoint{#1cm}{#2cm}}
}
\newcommand{\myGlobalTransformationIII}[2]
{
    \pgftransformcm{1}{0}{0}{.5}{\pgfpoint{#1cm}{#2cm}}
}
\newcommand{\RectanH}[2]
{
    \begin{scope}
             \myGlobalTransformationI{#1}{#2};     	
		\draw  [fill=yellow!40, opacity=.2](2.25,1) rectangle (4.25,-1);
		\draw  [fill=yellow!40, opacity=.2](-1,4.25) rectangle (1,2.25);
		\draw  [fill=yellow!40, opacity=.2](-2.25,-1) rectangle (-4.25,1);
		\draw  [fill=yellow!40, opacity=.2](1,-2.25) rectangle (-1,-4.25);		
    \end{scope}
}	
\newcommand{\RectanV}[2]
{
    \begin{scope}
             \myGlobalTransformationII{#1}{#2};     	
		\draw  [fill=yellow!40, opacity=.1](2.25,1) rectangle (4.25,-1);		
		\draw  [fill=yellow!40, opacity=.1](-2.25,-1) rectangle (-4.25,1);		
    \end{scope}
}	
\newcommand{\RectanVI}[2]
{
    \begin{scope}
             \myGlobalTransformationIII{#1}{#2};     	
		\draw  [fill=yellow!40, opacity=.1](-1,4.25) rectangle (1,2.25);
    \end{scope}
}	
\newcommand{\Connections}[3]
{
    \begin{scope}
             \myGlobalTransformationI{#1}{#2};
		\node at (3.25,0) {$F_{#3}$};
		\node at (0,3.25) {$F_{#3}$};
		\node at (-3.25,0) {$F_{#3}$};
		\node at (0,-3.25) {$F_{#3}$};
		\draw [<->, thick] ([shift=(16.5:3.5)] 0,0) arc (16.5:73.5:3.5); 
		\draw [<->, thick] ([shift=(106.5:3.5)] 0,0) arc (106.5:125:3.5); 
		\draw [dotted, thick] ([shift=(130:3.5)] 0,0) arc (130.5:140:3.5);
		\draw [<->, thick] ([shift=(144:3.5)] 0,0) arc (144:163.5:3.5);
		\draw [<->, thick]([shift=(196.5:3.5)] 0,0) arc (196.5:253.5:3.5); 
		\draw [<->, thick]([shift=(-16.5:3.5)] 0,0) arc (-16.5:-73.5:3.5); 
    \end{scope}
}	
\newcommand{\myBoxes}[2]
{
    \begin{scope}
             	\RectanH{#1}{#2cm-.5};
		\RectanH{#1}{#2cm+.5};		
		\RectanV{.3}{#2cm+0.35};
		\RectanV{-.3}{#2cm-0.35};
		\RectanVI{1.26}{#2cm-0.14};
		\RectanVI{-.67}{#2cm-2.42};
		\RectanVI{-1.26}{#2cm-3.1};
		\RectanVI{.7}{#2cm-.85};		
    \end{scope}
}	
\newcommand{\Interonnections}[2]
{
    \begin{scope}
        \myGlobalTransformationI{#1}{#2};     	 	             		
				\node at (0,1) {\scriptsize $W_{1,k-1}$};
		\node at (0.7,0) {\scriptsize $W_{1,k}$};
		\node at (0,-1) {\scriptsize $W_{1,k+1}$};
		\node at (-.5,0) {\scriptsize $W_{1,k+2}$};

		\draw [->, thick] (1.25,0) -- (2.25,0);
		\draw [->, thick] (-1.25,0) -- (-2.25,0);
		\draw [->, thick] (0,1.25) -- (0,2.25);
		\draw [->, thick] (0,-1.25) -- (0,-2.25);		
		
		\draw [->, thick] (4.25,0) -- (5.25,0);
		\draw [->, thick] (-4.25,0) -- (-5.25,0);
		\draw [->, thick] (0,4.25) -- (0,5.25);
		\draw [->, thick] (0,-4.25) -- (0,-5.25);
		
		\node at (0,6.5) {\scriptsize $Y_{1, k-1}$};
		\node at (-6.,0) {\scriptsize $Y_{1, k+2}$};
		\node at (0,-6.5) {\scriptsize $Y_{1, k+1}$};
		\node at (6.,0) {\scriptsize $Y_{1, k}$};					
    \end{scope}
}	
\newcommand{\InteronnectionsP}[2]
{
    \begin{scope}
        \myGlobalTransformationI{#1}{#2};     	 	             		
		\node at (0,1) {\scriptsize $W_{2, k-1}$};
		\node at (0.7,0) {\scriptsize $W_{2, k}$};
		\node at (0,-1) {\scriptsize $W_{2, k+1}$};
		\node at (-.7,0) {\scriptsize $W_{2, k+2}$};

		\draw [->, thick] (1.25,0) -- (2.25,0);
		\draw [->, thick] (-1.25,0) -- (-2.25,0);
		\draw [->, thick] (0,1.25) -- (0,2.25);
		\draw [->, thick] (0,-1.25) -- (0,-2.25);
		
		\draw [->, thick] (4.25,0) -- (5.25,0);
		\draw [->, thick] (-4.25,0) -- (-5.25,0);
		\draw [->, thick] (0,4.25) -- (0,5.25);
		\draw [->, thick] (0,-4.25) -- (0,-5.25);	
		
		\node at (0,6.5) {\scriptsize $Y_{2, k-1}$};
		\node at (-6.,0) {\scriptsize $Y_{2, k+2}$};
		\node at (0,-6.5) {\scriptsize $Y_{2, k+1}$};
		\node at (6.,0) {\scriptsize $Y_{2, k}$};		
    \end{scope}
}	
\newcommand{\InteronnectionsPC}[4]
{
    \begin{scope}
         \myGlobalTransformationIII{#1}{#2};     	 	             						
		\draw [opacity=.7,<->, thick,color=magenta] (0,#3cm) -- (0,-5.25);					
    \end{scope}
}	
\begin{tikzpicture}[scale=.5]
\Connections{0}{0}{2}
\Connections{0}{4}{1}
\InteronnectionsP{0}{0}
\Interonnections{0}{4}
\InteronnectionsPC{-1}{2}{.9}{+1}
\InteronnectionsPC{1}{4.2}{.9}{-1}
\InteronnectionsPC{3.2}{3.1}{0.9}{}
\InteronnectionsPC{-3.2}{3.1}{0.9}{+2}
\myBoxes{0}{0}
\myBoxes{0}{4}
\end{tikzpicture}
	\end{center}
	\vskip-3mm 
\caption{Finite fragments of two translation invariant quantum networks of equal size with nearest neighbour direct coupling. The 1st network is a plant and the 2nd is a controller. Also shown are the input and output fields of the nodes of the networks.}
\label{fig:Filter_Interconnection}
\end{figure}
In what follows, the operators and parameters of the plant and controller networks will be indicated by subscripts 1 and 2, respectively. In particular,
$X_{1,j}$ and $X_{2,j}$ denote the vectors of dynamic variables for the $j$th nodes of the plant and the controller, respectively, with the corresponding dimensions $2n_1$ and $2n_2$. The total Hamiltonian of the plant-controller interconnection is
\begin{equation}
\label{H_total}
	H := H_1 + H_2 + H_{12}.
\end{equation}
Here, in view of (\ref{HR}), the plant Hamiltonian $H_1$ and the controller Hamiltonian $H_2$ are given by
\begin{equation}
	\label{HTT}
	\!H_k := \frac{1}{2} \sum_{j=0}^{N-1}
		   \Big(
					X_{k,j}^{\rT}
					\sum_{\ell = -d_k}^{d_k} R_{k,\ell} X_{k, \mod(j-\ell, N)}
		   \Big),
\qquad
k=1,2,\!\!\!\!
\end{equation}
where
$R_{k, \ell} = R_{k, -\ell}^{\rT} \in \mR^{2n_k\x 2n_k}$, with $\ell = 0, \pm 1, \ldots, \pm d_k$. Also, $H_{12}$ is the interaction Hamiltonian:
\begin{equation}
	\label{H12}
	H_{12}:= \sum_{j=0}^{N-1}
			  \Big(
			 	 X_{1,j}^{\rT}
			  	\sum_{\ell = -\wt{d}}^{\wt{d}}
			  		\wt{R}_{\ell} X_{2, \mod(j -\ell, N)}
			  \Big),
\end{equation}
where $\wt{d}$ denotes the range of direct coupling between the plant and controller nodes (with $N> 2\max(d_1, d_2, \wt{d})$), and $\wt{R}_{\ell} \in \mR^{2n_1 \times 2n_2}$ are the plant-controller coupling matrices, with $|\ell| \< \wt{d}$. It is assumed that the plant variables commute with the controller variables: $[X_{1,j}, X_{2,k}^{\rT}] = 0$ for all $0\< j,k < N$. Hence, $H_{12}$ in (\ref{H12}) is indeed a self-adjoint operator.  Due to the structure of the total Hamiltonian, the plant-controller network is a block circulant system \cite{Wall1978}. By substituting $H$ specified by (\ref{H_total})--(\ref{H12}) into the QSDE (\ref{dx}), it follows that the plant and controller variables are governed by a set of coupled QSDEs:
\begin{align}
		\nonumber				
		\rd X_{1,j}
		=&
    \Big(
		 \sum_{\ell =-d_1}^{d_1}
		 	A_{1,\ell} X_{1,\mod(j-\ell,N)}
		+
		\sum_{\ell = -\wt{d}}^{\wt{d}}
			\wt{A}_{1,\ell} X_{2,\mod(j-\ell,N)}
\Big)\rd t
		\\
\label{dX1j}
		&+ B_1 \rd W_{1,j},\\
\nonumber
		\rd X_{2,j}	
		=&
    \Big(
        \sum_{\ell = -\wt{d}}^{\wt{d}}
				\wt{A}_{2,\ell} X_{1,\mod(j+\ell,N)}
		+
			\sum_{\ell =-d_2}^{d_2}
				A_{2,\ell} X_{2,\mod(j-\ell,N)}
\Big)\rd t
		\\
\label{dX2j}
		& + B_2 \rd W_{2,j}
\end{align}\noindent
for all $0\< j< N$,
where, similarly to (\ref{BC}) and (\ref{AA}), the matrices of coefficients are given by
\begin{align}
	\label{AAk}
    A_{k,\ell}
    & :=
    \left\{
        \begin{array}{ll}
            2 \Theta_k R_{k,0} - \frac{1}{2} B_k J_k B_k^\rT \Theta_k^{-1} & {\rm if}\ \ell = 0
            \\
            2 \Theta_k R_{k,\ell} & {\rm if}\ 0< |\ell|\< d_k
        \end{array}
    \right.,\\
    \label{tAA2}
	\wt{A}_{1,\ell}&:= 2\Theta_1 \wt{R}_{\ell},
    \quad
	\wt{A}_{2,\ell}:= 2\Theta_2 \wt{R}_{\ell}^\rT,
    \quad |\ell|\< \wt{d}, \\
\label{BBk}
    B_k & := 2 \Theta_k M_k^{\rT }.
\end{align}
The coupled QSDEs (\ref{dX1j})--(\ref{dX2j}), which are  associated with nodes of the plant and the controller, can be studied in the spatial frequency domain. Similarly to \cite{VP2014}, we will use the discrete Fourier transforms (DFTs) of the quantum processes $X_{k,j}$ and $W_{k,j}$ over the spatial subscripts $j=0, ..., N-1$ for $k=1,2$:
\begin{equation}
	\label{ZtranX_ZtranW}
	\!\cX_{k,z}(t)
    :=\sum_{j=0}^{N-1} z^{-j}X_{k,j}(t),
    \qquad
	\cW_{k,z}(t):=\sum_{j=0}^{N-1} z^{-j}W_{k,j}(t),	\!\!\!\!
\end{equation}
which are considered for $z\in \mU_N$ from the set of $N$th roots of unity
\begin{equation}
	\label{mUN}
	\mU_N
    :=
    \big\{
        \re^{\frac{ 2\pi i j}{N}}: \
        j=0,\ldots,N-1
    \big\}.
\end{equation}
Then the corresponding augmented quantum process
$
\cX_z
$
is evolved in time by the QSDE
\begin{align}
	\label{ZtranSys}	
	\rd \cX_z  &=  \cA_z \cX_z \rd t+ \cB \rd \cW_z,
    \quad
    \cX_z :=
	{\small\begin{bmatrix}
		\cX_{1,z} \\  \cX_{2,z}
	\end{bmatrix}},
\ \
    \cW_z:={\small\begin{bmatrix} \cW_{1,z} \\ \cW_{2,z} \end{bmatrix}},
\end{align}
where the matrices  $A_z$ and $\cB$ are defined in terms of the network parameters in (\ref{AAk})--(\ref{BBk}) as
\begin{equation}
		\label{cAz}
		\cA_z:=
		{\small\begin{bmatrix}
				\scriptsize
				\sum_{\ell=-d_1}^{d_1} z^{-\ell} A_{1,\ell}  		
				& 			
				\sum_{\ell=-\wt{d}}		
				^{\wt{d}} z^{-\ell}\wt{A}_{1,\ell}
				\\
				\sum_{\ell=-\wt{d}}^{\wt{d}} z^{\ell}\wt{A}_{2,\ell} 		
				&		
				\sum_{\ell=-d_2}^{d_2} z^{-\ell}A_{2,\ell}
		\end{bmatrix}},
    \quad
	\cB
    :=
	{\small\begin{bmatrix}
		B_1 & 0 \\ 0 & B_2
	\end{bmatrix}},
\end{equation}
see \cite{SVP_TI_2015} for more details.
Here, use is made of (\ref{ZtranX_ZtranW}) together with the PBCs and the well-known properties of DFTs due to the assumption that $z\in \mU_N$. As discussed in \cite{VP2014}, it follows from the CCRs (\ref{xCCR}) that
\begin{align}
\nonumber	
    \!\!\!\!\![\cX_z, \cX_v^\dagger]
    & =
    \sum_{j,k=0}^{N-1}
    z^{-j} v^k [X_j,X_k^\rT]\\
\label{XX}
						& =2i\Theta \sum_{k=0}^{N-1}\Big(\frac{v}{z}\Big)^k
						=2i\delta_{zv}N\Theta,
    \qquad
    X_j:= {\small\begin{bmatrix}X_{1,j} \\ X_{2,j}\end{bmatrix}}\!\!\!
\end{align}
for all $z,v\in \mU_N$, where $\Theta:={\scriptsize \begin{bmatrix}\Theta_1 & 0\\ 0 & \Theta_2\end{bmatrix}}$ is the common CCR matrix of the augmented vectors  $X_j$ for all $j=0, \ldots, N-1$. By a similar reasoning, (\ref{WW}) implies that
\begin{align}
	\label{Ito_net}
	\rd \cW_z \rd \cW_v^\dagger = N \delta_{zv} \Omega \rd t,
    \qquad
    \Omega :={\small\begin{bmatrix}\Omega_1 & 0\\ 0 & \Omega_2\end{bmatrix}},
\end{align}
where $\Omega_1$ and $\Omega_2$ are the quantum It\^{o} matrices of the plant and controller Wiener processes $W_{1,j}$ and $W_{2,j}$. Assuming the network size $N$ to be fixed, we denote the matrix of second-order cross-moments of the quantum processes $\cX_z$ and $\cX_v$ in (\ref{ZtranX_ZtranW}) by
	\begin{equation}
		\label{DFTcov}
		\cS_{z,v}(t)
        :=
        \bE
        (
            \cX_z(t)
            \cX_v(t)^{\dagger}
        ).	
	\end{equation}
Here and in what follows,  the quantum expectation $\bE(\cdot)$ is over the tensor product $\rho:= \varpi \ox \upsilon$  of the initial state $\varpi$ of the plant-controller network and the vacuum state $\upsilon$ of the external fields.

\begin{lem}
\label{lem:cov_net_time}
Suppose the matrix $\cA_z$ in (\ref{cAz}) is Hurwitz for any $z\in \mU_N$ from (\ref{mUN}). Then for all $z,v \in \mU_N$, the matrix in (\ref{DFTcov}) has the limit
$    \cS_{z,v}(\infty)
    :=
    \lim_{t\to +\infty}
    \cS_{z,v}(t)
    =
    N \delta_{zv} \cS_z
$, 
where the matrix
$\cS_z=\cS_z^*\succcurlyeq 0$ is a unique solution of the ALE
\begin{equation}
\label{cSz}
    \cA_z \cS_{z} + \cS_{z} \cA_z^* + \cB \Omega \cB^\rT=0.
\end{equation}
\end{lem}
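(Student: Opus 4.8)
The plan is to obtain a linear matrix differential equation for the second-order moment matrix $\cS_{z,v}(t)$ in (\ref{DFTcov}) directly from the QSDE (\ref{ZtranSys}) and the quantum It\^o relations (\ref{Ito_net}), and then to extract the $t\to+\infty$ asymptotics from the assumed Hurwitz property of the matrices $\cA_z$. Throughout, the initial state $\varpi$ is taken to have finite second moments, so that $\cS_{z,v}(0)=\bE(\cX_z(0)\cX_v(0)^\dagger)$ is well defined.

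First I would apply the quantum It\^o product rule to $\cX_z(t)\cX_v(t)^\dagger$. From (\ref{ZtranSys}) and its adjoint $\rd\cX_v^\dagger=\cX_v^\dagger\cA_v^*\,\rd t+\rd\cW_v^\dagger\cB^\rT$ (valid since $\cB$ is real and $\cA_v$ has scalar entries), and using the It\^o table (\ref{Ito_net}), the only surviving second-order term is $(\rd\cX_z)(\rd\cX_v^\dagger)=\cB(\rd\cW_z)(\rd\cW_v^\dagger)\cB^\rT=N\delta_{zv}\cB\Omega\cB^\rT\,\rd t$. Since $\cX_z(t)$ and $\cX_v(t)$ are adapted to the Fock-space filtration while the increments $\rd\cW_z$, $\rd\cW_v^\dagger$ point into the future, the two quantum stochastic integrals $\int_0^t\cB(\rd\cW_z)\cX_v^\dagger$ and $\int_0^t\cX_z(\rd\cW_v^\dagger)\cB^\rT$ have zero expectation over $\rho=\varpi\ox\upsilon$ (the field part being the vacuum $\upsilon$). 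Taking $\bE(\cdot)$ therefore gives the Sylvester-type ODE
\[
    \dot{\cS}_{z,v}(t)=\cA_z\cS_{z,v}(t)+\cS_{z,v}(t)\cA_v^*+N\delta_{zv}\cB\Omega\cB^\rT .
\]

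Next I would solve this ODE. For $z\ne v$ the forcing vanishes and $\cS_{z,v}(t)=\re^{t\cA_z}\cS_{z,v}(0)\re^{t\cA_v^*}$; since $\cA_z$ and $\cA_v$ are Hurwitz (hence so is $\cA_v^*$), both exponential factors decay and $\cS_{z,v}(\infty)=0=N\delta_{zv}\cS_z$. For $z=v$, the variation-of-constants formula yields $\cS_{z,z}(t)=\re^{t\cA_z}\cS_{z,z}(0)\re^{t\cA_z^*}+N\int_0^t\re^{s\cA_z}\cB\Omega\cB^\rT\re^{s\cA_z^*}\,\rd s$; the first term vanishes as $t\to+\infty$ and the integral converges absolutely by the Hurwitz property, giving $\cS_{z,z}(\infty)=N\cS_z$ with $\cS_z:=\int_0^{+\infty}\re^{s\cA_z}\cB\Omega\cB^\rT\re^{s\cA_z^*}\,\rd s$. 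Integrating $\frac{\rd}{\rd s}\big(\re^{s\cA_z}\cB\Omega\cB^\rT\re^{s\cA_z^*}\big)$ from $0$ to $+\infty$ then shows that $\cS_z$ satisfies the ALE (\ref{cSz}); uniqueness follows because the operator $Z\mapsto\cA_zZ+Z\cA_z^*$ has spectrum $\{\lambda+\overline{\mu}\}$ over eigenvalues $\lambda,\mu$ of $\cA_z$, all with negative real part, hence is invertible. Finally $\cS_z=\cS_z^*\succcurlyeq0$, either since $\Omega\succcurlyeq0$ and $\cB$ real make each integrand Hermitian positive semi-definite, or directly since $\cS_{z,z}(t)=\bE(\cX_z\cX_z^\dagger)\succcurlyeq0$ for every $t$.

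The step requiring the most care --- the main obstacle --- is the rigorous vanishing of the It\^o-integral terms under $\bE(\cdot)$: this rests on the non-anticipating property of Hudson--Parthasarathy stochastic integration, applied to the complex combinations $\cX_z$, $\cW_z$ of the physical operators, together with the vacuum state of the external fields. Once this and the It\^o tables (\ref{XX})--(\ref{Ito_net}) are in place, the remaining steps are standard linear-systems arguments.
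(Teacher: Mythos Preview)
Your proposal is correct and follows essentially the same approach the paper indicates: the paper does not give a full proof but refers to \cite{SVP_TI_2015} and states that the argument is ``based on linearity of the QSDE (\ref{ZtranSys}), the relation (\ref{Ito_net}) and the fact that the forward increments of the quantum Wiener process in the vacuum state are uncorrelated with the adapted processes''---which is precisely your derivation of the Sylvester ODE via the quantum It\^o product rule and vanishing expectation of the stochastic integrals. Your subsequent solution of the ODE and verification of the ALE, uniqueness, and positive semi-definiteness are the standard linear-systems steps one would expect to fill in.
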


The proof of Lemma~\ref{lem:cov_net_time} can be found in \cite{SVP_TI_2015} and is based on linearity of the QSDE (\ref{ZtranSys}), the relation (\ref{Ito_net}) and the fact that the forward increments of the quantum Wiener process in the vacuum state are uncorrelated with the adapted processes.
The matrix-valued function $\mU \ni z\mapsto \cS_z$ in (\ref{cSz}) (which is well-defined on the unit circle under a stronger stability condition formulated in the next section) is the spatial spectral density \cite{VP2014} of the closed-loop system. This density encodes the covariance structure of the plant-controller variables in the invariant Gaussian quantum state \cite{KRP_2010} in the limit of infinite network size $N\to +\infty$. In this sense, $\cS_z$ is a network counterpart to the controllability Gramian \cite{AM_1989} of the system.
\section{OPTIMAL COHERENT QUANTUM CONTROL PROBLEM}\label{sec:Coherent_Quantum_Control}

In view of the block circulant structure of the plant-controller network of Section~\ref{sec:Coherent_Quantum_Control_Arch}, this closed-loop system is stable for any network size $N\> 1$ if and only if the matrix $\cA_z $ in (\ref{cAz}) is Hurwitz for all $z \in \mU$, that is,
\begin{equation}
	\label{stab}
	\max_{z \in \mU} \br(\text{e}^{\cA_z}) < 1.
\end{equation}
Here, use is also made of continuity of $\cA_z$ with respect to $z\in \mU$ and the property that  $\bigcup_{N=1}^{+\infty}\mU_N$ is a dense subset of $\mU$. In what follows, the plant-controller network is called \emph{stable} if it satisfies (\ref{stab}). In this case, the decentralized coherent quantum controller is referred to as a \emph{stabilizing} controller. As a performance criterion for such controllers, we will use a steady-state weighted mean square cost functional which has to be minimized in the limit of infinite network size. 
For simplicity, it is assumed that each node of the controller is only coupled to one node of the plant, that is, $\wt{d}=0$.  
The performance cost functional is given by
\begin{equation}
	\label{cost_fun}
	\cE_N
    :=
    \frac{1}{N}
    \lim_{t \to +\infty}
    \bE
    \sum_{j,k=0}^{N-1}
    	\wt{X}_j(t)^\rT 	
		\sigma_{j-k} 				
		\wt{X}_k(t),
\end{equation}
where $\wt{X}_j$ are auxiliary quantum processes associated with $X_j$ from (\ref{XX}) by
\begin{equation}
\label{XE}
	\wt{X}_j(t)
	:=
	E
    X_j,
	\qquad
	E:=
	{\small\begin{bmatrix}
			I_{2n_1} & 0
			\\
			0 & \wt{R}_0
	\end{bmatrix}},
\end{equation}
for $j=0, \ldots, N-1$, and $\wt{R}_0$ is the coupling matrix which, in accordance with (\ref{H12}), specifies the interaction Hamiltonian $	H_{12}= \sum_{j=0}^{N-1}
			 	 X_{1,j}^{\rT}
			  		\wt{R}_0 X_{2, j}
$  in the case $\wt{d}=0$.
Moreover, $\sigma_k$ is a given $\mR^{4n_1 \x 4n_1}$-valued sequence which satisfies $\sigma_{−-k} = \sigma_k^\rT$
for all integers $k$ and specifies a real symmetric block Toeplitz weighting matrix $(\sigma_{j-k})_{0\<j,k<N}$. The block Toeplitz structure of the weighting matrix in (\ref{cost_fun}) corresponds to the translation invariance of the quantum network being considered.
A matrix-valued map $\mU \ni z \mapsto \Sigma_z = \Sigma_z^*$, defined as the Fourier transform
\begin{equation}
	\label{Sigmaz}
	\Sigma_z:=\sum_{k=-\infty}^{+\infty} z^{-k} \sigma_k,
\end{equation}
describes the spectral density of the weighting sequence. In order to ensure the absolute convergence of the series in (\ref{Sigmaz}), it is assumed that $\sum^{+\infty}_{k=-\infty} \|\sigma_k\| < +\infty$, which also makes $\Sigma_z$ a continuous function of $z$. The fulfillment of the condition $\Sigma_z \succcurlyeq 0$ for all $z \in \mU$ is necessary and sufficient for $(\sigma_{j-k})_{0\<j,k<N} \succcurlyeq 0$ to hold  for all $N\>1$; see, for example \cite{gren58}. In this case, the sum on the right-hand side of (\ref{cost_fun}) is a positive semi-definite operator, and hence, $\cE_N \>0$.
The cost functional $\cE_N$ in (\ref{cost_fun}) resembles the classical LQG control performance index \cite{AM_1989} which imposes a cost on the actuation signal. A similar approach can be used in order to ``penalize'' other variables of interest by an appropriate choice of the weighting matrices $\sigma_k$ in (\ref{cost_fun}) and the matrix  $E$ in (\ref{XE}).
For any stabilizing controller and any given (sufficiently large) network size $N$, Lemma~\ref{lem:cov_net_time} allows the cost functional in (\ref{cost_fun}) can be computed as \cite{SVP_TI_2015}:
\begin{equation}	
\label{cEN}
	\cE_N
		 =
    \frac{1}{N}
    \sum_{z \in \mU_N}
    \big\bra
        \wh{\Sigma}_N(z),\,
        E \cS_z E^{\rT}
    \big\ket,
\end{equation}
where the matrix $\cS_z$ is the unique solution of the ALE (\ref{cSz}), and
$	\wh{\Sigma}_N(z)
    :=
    \sum_{|k|<N}
    \big(
        1-\frac{|k|}{N}
    \big)
    z^{-k}\sigma_k
$. 
Due to the uniform convergence $\lim_{N\to +\infty} \wh{\Sigma}_N(z) = \Sigma_z$ to the spectral density (\ref{Sigmaz}) over $z \in \mU$, the representation (\ref{cEN}) leads to  the following infinite network size limit of the mean square cost functional $\cE_N$ in (\ref{cost_fun}):
\begin{align}
\nonumber
	\cE
      :=	&
	\lim_{N \to+\infty}
    \cE_N
    =
    \frac{1}{2 \pi i}
    \oint_\mU
    \bra
        \Sigma_z,\,
        E \cS_z E^\rT
    \ket
    \frac{\rd z}{z}\\
\label{cE}
    =&  
    \frac{1}{2 \pi}
    \int_0^{2\pi}
    \big\bra
        \Sigma_{\re^{i\varphi}},\,
        E \cS_{\re^{i\varphi}} E^\rT
    \big\ket
    \rd \varphi
    =
    \Res_{z=0}
    \frac{\bra
        \Sigma_z,\,
        E \cS_z E^\rT
    \ket}{z}
\end{align}
for any stabilizing controller in the sense of (\ref{stab}); see \cite{VP2014,SVP_TI_2015} for more details. The resulting cost functional $\cE$ in (\ref{cE}) corresponds to the thermodynamic limit of equilibrium statistical mechanics \cite{Ruelle78}.
We will now consider a decentralized coherent quantum control problem which is formulated as the minimization
\begin{equation}
\label{cEmin}
  \cE\longrightarrow \min
\end{equation}
of the infinite network cost functional in (\ref{cE}) over the energy and coupling matrices of stabilizing controllers (with all the dimensions of individual nodes being fixed). A particular case of such a network with nearest neighbour interaction (when $d_1 = d_2=1$ and $\wt{d}=0$) is depicted in Fig.~\ref{fig:chain_direct_coupling}.
\begin{figure}[htp]
\begin{center}
\newcommand{\Chain}[2]
{
    \begin{scope}
		\node at (0,#1) {$F_{#2}$};
		\node at (-5,#1) {$F_{#2}$};
		\node at (5,#1) {$F_{#2}$};
		\node at (10,#1) {...};
		\node at (-10,#1) {...};
		\draw [fill=yellow!40, opacity=.2] (-1,-1+#1) rectangle (1,1+#1);
		\draw  [fill=yellow!40, opacity=.2] (-6,-1+#1) rectangle (-4,1+#1);
		\draw  [fill=yellow!40, opacity=.2] (4,-1+#1) rectangle (6,1+#1);
		\draw [<->, thick] (-1,#1) -- (-4,#1);
		\draw [<->, thick] (1,#1) -- (4,#1);
		\draw [<->, thick] (6,#1) -- (9,#1);
		\draw [<->, thick] (-9,#1) -- (-6,#1);
		\draw [->, thick] (2,-2+#1) -- (1,-1+#1);
		\draw [->, thick] (7,-2+#1) -- (6,-1+#1);
		\draw [->, thick] (-3,-2+#1) -- (-4,-1+#1);
		
		\draw [->, thick] (-1,1+#1) -- (-2,2+#1);
		\draw [->, thick] (-6,1+#1) -- (-7,2+#1);
		\draw [->, thick] (4,1+#1) -- (3,2+#1);
		
		\node at (3.25,-1.5+#1) {\scriptsize $W_{#2, k}$};
		\node at (8,-1.5+#1) {\scriptsize $W_{#2, k+1}$};
		\node at (-1.75,-1.5+#1) {\scriptsize $W_{#2, k-1}$};	
		
		\node at (-3,1.5+#1) {\scriptsize $Y_{#2, k}$};
		\node at (-8,1.5+#1) {\scriptsize $Y_{#2, k-1}$};
		\node at (2,1.5+#1) {\scriptsize $Y_{#2, k+1}$};		
    \end{scope}
}	
\begin{tikzpicture}[scale=.3]
		\Chain{0}{2}
		\Chain{4.5}{1}
		
		\draw [color=magenta,<->, thick] (5,1+2.5) -- (5,1);
		\draw [color=magenta,<->, thick] (0,1+2.5) -- (0,1);
		\draw [color=magenta,<->, thick] (-5,1+2.5) -- (-5,1);		
\end{tikzpicture}
\end{center}
\vskip-3mm 
\caption{The infinite plant-controller network with nearest neighbour direct coupling. Also shown are the input and output fields.}
\label{fig:chain_direct_coupling}
\end{figure}
With the matrix $ B_2 \in \mR^{2n_2 \x 2m_2} $ in (\ref{BBk}) being fixed, the cost $\cE$ in (\ref{cE}) is a function of
\begin{align}
\nonumber
     g
     & := 
     (R_{2,0},R_{2,1},\ldots,R_{2,d_2},\wt{R}_0) \\ 
\label{g}
     & \in 
     \mS_{2n_2}
     \x
  	   (\mR^{2n_2 \times 2n_2})^{d_2}
     \x
     \mR^{2n_1 \x 2n_2}
     =:
     \mG
\end{align}
which parameterizes the controller matrices in (\ref{AAk})--(\ref{tAA2}). The minimization of the cost functional $\cE$ in (\ref{cEmin}) is carried out over the set
\begin{equation}
    \label{G0}
        \mG_0:= \{g \in \mG:\ \cA_z \ {\rm in}\ (\ref{cAz})\ {\rm satisfies}\ (\ref{stab}) \}
\end{equation}
of those $g$ which specify stabilizing quantum controllers (\ref{dX2j}) for the quantum plant (\ref{dX1j}). The set $\mG$ on the right-hand side of (\ref{g}) is endowed with the structure of a Hilbert space with the direct sum inner product
$	\bra
		g
		,
		g'
	\ket
	:=
\bra
		\wt{R}_0
		,
		\wt{R}'_0
		\ket
+
	\sum_{\ell=0}^{d_2}
		\bra
			R_{2,\ell},R_{2,\ell} '
		\ket
$.
In what follows, we will use an auxiliary spectral density $\cQ_z$ defined on the unit circle $z\in \mU$  as the unique solution of the ALE 
\begin{align}
		\label{cQz}
		\cA_z^* \cQ_z + \cQ_z \cA_z + E^\rT \Sigma_z E=0
\end{align}
for a stabilizing controller, where use is made of (\ref{XE}) and (\ref{Sigmaz}). The function $\cQ_z$ corresponds to the observability Gramian \cite{AM_1989}. We will also use a network counterpart to the Hankelian \cite{VP_2013a}:
\begin{align}
		\label{cHz}
		\cH_z := \cQ_z \cS_z,
\end{align}
where $\cS_z$ is given by (\ref{cSz}). These and related matrices are partitioned into four blocks $(\cdot)_{jk}$ indexed by $1\< j,k\< 2$ according to their association with the plant and controller variables, and the Fourier parameter $z$ will sometimes be omitted.  
The following theorem, which  is formulated for the case $\wt{d}=0$,  can be extended to arbitrary range of the plant-controller coupling.

\begin{thm}
\label{thm:NecCon}
Suppose the plant-controller coupling range in the network is $\wt{d}=0$. Then necessary conditions of optimality for a stabilizing controller in the problem (\ref{cEmin}) are as follows:
		\begin{align}
               	\label{Nec_Rl}
               	&\Res_{z=0}
               \frac{
					\Re                	
                	(
                        z^{\ell}
                        (
                		\Theta_2 \cH_{22}
                        -
                        \cH_{22}^* \Theta_2
                        )
               		)}
               		{z}
                =0,
               	\qquad
               \ell=0, \ldots, d_2,\\
\label{Nec_Rtl}
                &\Res_{z=0}
                \frac{
					\Re				
					(
						\cH_{21}^* \Theta_2
                         -\Theta_1 \cH_{12}
						+
                    \frac{1}{2}
						(						
						\Sigma_z
						E
						\cS_z
						)_{22}
					)}{z}
         = 0.
       \end{align}
\end{thm}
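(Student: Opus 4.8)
The plan is to compute the Fr\'echet derivative of the infinite-network cost $\cE$ in (\ref{cE}) with respect to the controller parameter $g=(R_{2,0},\ldots,R_{2,d_2},\wt R_0)\in\mG$ at a stabilizing controller and to use the fact that $\mG_0$ in (\ref{G0}) is an open subset of the Hilbert space $\mG$, so that vanishing of this derivative is necessary for optimality. Starting from $\cE=\frac{1}{2\pi i}\oint_\mU\bra\Sigma_z,\,E\cS_z E^\rT\ket\frac{\rd z}{z}$, I note that an admissible variation $\delta g$ produces three effects: (i)~a variation $\delta\cA_z$ of the closed-loop matrix (\ref{cAz}) which, by (\ref{AAk})--(\ref{tAA2}), is \emph{block structured} --- a perturbation $\delta R_{2,\ell}$ with $\ell\>1$ enters only the $(2,2)$ block as $z^{-\ell}2\Theta_2\delta R_{2,\ell}+z^{\ell}2\Theta_2\delta R_{2,\ell}^\rT$ (both powers appearing because $R_{2,-\ell}=R_{2,\ell}^\rT$), a perturbation $\delta R_{2,0}\in\mS_{2n_2}$ enters the $(2,2)$ block as $2\Theta_2\delta R_{2,0}$, and a perturbation $\delta\wt R_0$ enters the off-diagonal blocks as $2\Theta_1\delta\wt R_0$ in position $(1,2)$ and $2\Theta_2\delta\wt R_0^\rT$ in position $(2,1)$; (ii)~a variation $\delta E=\blockdiag(0,\delta\wt R_0)$ of the weighting injection matrix in (\ref{XE}), which is nonzero \emph{only} when $\wt R_0$ is varied; and (iii)~the induced variation $\delta\cS_z$, solving the linearized ALE $\cA_z\,\delta\cS_z+\delta\cS_z\,\cA_z^*=-(\delta\cA_z\,\cS_z+\cS_z\,\delta\cA_z^*)$ obtained from (\ref{cSz}) with $\cB$ fixed.

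Differentiating the cost gives $\delta\cE=\frac{1}{2\pi i}\oint_\mU\big(2\Re\bra\Sigma_z,\,\delta E\,\cS_z E^\rT\ket+\bra E^\rT\Sigma_z E,\,\delta\cS_z\ket\big)\frac{\rd z}{z}$, the two conjugate terms arising from $\delta(E\cS_z E^\rT)$ combining via the Hermitian symmetry of $\Sigma_z$ and $\cS_z$ on $\mU$. The key manoeuvre is to eliminate $\delta\cS_z$ by pairing the linearized ALE against the adjoint spectral density $\cQ_z$: substituting $E^\rT\Sigma_z E=-(\cA_z^*\cQ_z+\cQ_z\cA_z)$ from (\ref{cQz}) and using the adjoint relations for the Frobenius inner product together with the linearized ALE, I obtain $\bra E^\rT\Sigma_z E,\,\delta\cS_z\ket=2\Re\Tr\big(\cH_z^*\,\delta\cA_z\big)$, where the identity $\cS_z\cQ_z=(\cQ_z\cS_z)^*=\cH_z^*$ (valid since $\cQ_z$ and $\cS_z$ are Hermitian on $\mU$) is used, with $\cH_z$ from (\ref{cHz}). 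Inserting the block form of $\delta\cA_z$ contracts its nonzero blocks against the corresponding blocks of $\cH_z^*$, i.e.\ against $\cH_{22}^*$ in the $R_{2,\ell}$-directions and against $\cH_{21}^*,\cH_{12}^*$ in the $\wt R_0$-direction, while the direct $\delta E$-term produces the block $(\Sigma_z E\cS_z)_{22}$; the relative coefficient $\frac12$ in (\ref{Nec_Rtl}) arises because the $\delta\cA_z$-contribution carries an extra factor $2$ from $\wt A_{1,0}=2\Theta_1\wt R_0$ whereas the $\delta E$-contribution does not.

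It remains to put the result into the stated form. I convert each contour integral into a residue at the origin, $\frac{1}{2\pi i}\oint_\mU(\cdot)\frac{\rd z}{z}=\Res_{z=0}\frac{(\cdot)}{z}$ as in (\ref{cE}), and use $\overline z=1/z$ on $\mU$ to fold every integrand together with its complex conjugate into an entrywise real part $\Re(\cdot)$; combined with $\Theta_1^\rT=-\Theta_1$, $\Theta_2^\rT=-\Theta_2$ and the fact that $\Res_{z=0}$ commutes with transposition, this is what turns the raw $z^{\pm\ell}$-weighted traces into $\Re\big(z^{\ell}(\Theta_2\cH_{22}-\cH_{22}^*\Theta_2)\big)$ (which is Hermitian) and into the integrand of (\ref{Nec_Rtl}). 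Finally, stationarity $\delta\cE=0$ for all admissible $\delta g$ yields the conditions: since $\delta R_{2,\ell}$ ranges over all of $\mR^{2n_2\x 2n_2}$ for $\ell\>1$, over $\mS_{2n_2}$ for $\ell=0$, and $\delta\wt R_0$ over all of $\mR^{2n_1\x 2n_2}$, the corresponding gradient matrices must vanish, giving (\ref{Nec_Rl}) for $\ell=1,\ldots,d_2$ and (\ref{Nec_Rtl}); for $\ell=0$ only the symmetric part is constrained, but since $\Re(\Theta_2\cH_{22}-\cH_{22}^*\Theta_2)$ is already symmetric, the condition is correctly written in the unrestricted form.

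I expect the principal obstacle to be organisational rather than conceptual: one must propagate a single variation $\delta g$ simultaneously through the closed-loop matrix $\cA_z$, through the weighting matrix $E$ (which shares the parameter $\wt R_0$ with the interaction Hamiltonian), and through the ALEs (\ref{cSz}) and (\ref{cQz}), and then carry out a long chain of trace, transpose and complex-conjugation manipulations that must collapse --- via the Hermitian symmetries on $\mU$, the reflection $z\leftrightarrow1/z$, and the antisymmetry of $\Theta_1,\Theta_2$ --- into the compact residue identities of the theorem. A minor additional point is to justify the differentiability of $z\mapsto\cS_z,\cQ_z$ and of $g\mapsto\cE$, together with the interchange of differentiation and contour integration, which rests on the strict stability (\ref{stab}): this makes $\cA_z$ Hurwitz on a neighbourhood of $\mU$, hence $\cS_z$ and $\cQ_z$ analytic there and smoothly dependent on $g$ throughout the open set $\mG_0$.
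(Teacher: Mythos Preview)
Your proposal is correct and follows essentially the same route as the paper's proof: differentiate the integrand $\bra\Sigma_z,E\cS_z E^\rT\ket$ under the contour integral, replace $E^\rT\Sigma_z E$ via the dual ALE (\ref{cQz}) to convert $\bra E^\rT\Sigma_z E,\delta\cS_z\ket$ into $2\Re\bra\cH_z,\delta\cA_z\ket$, insert the explicit block forms of $\delta\cA_z$ and $\delta E$, and set the resulting Fr\'echet derivatives to zero in residue form. Your additional remarks on the symmetry of $\Re(\Theta_2\cH_{22}-\cH_{22}^*\Theta_2)$ for the $\ell=0$ case and on the analytic dependence of $\cS_z,\cQ_z$ on $z$ and $g$ (justifying differentiation under the integral) go slightly beyond what the paper spells out, but do not change the argument.
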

\begin{proof}
The matrices $R_{2,\ell}=R_{2,-\ell}^{\rT}$ in (\ref{AAk}) influence the cost functional $\cE$ in (\ref{cE}) only through the matrix $\cS_z$ in (\ref{cSz}) which depends on those matrices through $\cA_z$ in (\ref{cAz}). Hence, the first variation of the integrand in (\ref{cE}) with respect to $R_{2,\ell}$ is
\vskip-8mm
\begin{align}
\nonumber
		\delta_{R_{2,\ell}}&  
    \bra
        \Sigma_z,\,
        E \cS_z E^{\rT}
    \ket
        =
    \bra
        E^{\rT}\Sigma_z E,\,
        		\delta_{R_{2,\ell}}\cS_z
    \ket\\
\nonumber
        &=
		-\bra \cA_z^* \cQ_z + \cQ_z \cA_z  ,\, \delta_{R_{2,\ell}} \cS_z \ket	\\
\nonumber
		&=
		-\bra \cQ_z  , \cA_z \delta_{R_{2,\ell}} \cS_z + (\delta_{R_{2,\ell}} \cS_z) \cA_z^*\ket	\\
\label{dR1}
		&=
		\bra \cQ_z  , (\delta_{R_{2,\ell}} \cA_z)  \cS_z + \cS_z \delta_{R_{2,\ell}} \cA_z^*\ket	
		=
		2\Re
		\bra \cH_z  , \delta_{R_{2,\ell}} \cA_z \ket
\end{align}\noindent	
for any $z\in \mU$ and $\ell = 0, \ldots, d_2$, provided the controller parameters in (\ref{g}) satisfy $g \in \mG_0$ in (\ref{G0}). 
Here, use is made of the ALEs (\ref{cSz}), (\ref{cQz}) and the Hankelian $\cH_z$ from (\ref{cHz}), and
\begin{equation}
\label{dR2}
    \delta_{R_{2,\ell}}
    \cA_z
    \!=\!
    {\small\begin{bmatrix}0 & 0\\ 0 & 2\end{bmatrix}}
    \ox
    \left\{\!\!\!\!
    {\small\begin{array}{ll}
        \Theta_2\delta R_{2,0} & {\rm if}\ \ell = 0\\
        \Theta_2(z^{-\ell} \delta R_{2,\ell} + z^{\ell} \delta R_{2,\ell}^{\rT})
        & {\rm if}\ \ell> 0
    \end{array}}
    \right.\!\!\!.\!\!\!
\end{equation}\noindent	
Substitution of (\ref{dR2}) into (\ref{dR1}) yields the following Fr\'{e}chet derivatives
\begin{equation}
\label{dR3}
		\!\!\d_{R_{2,\ell}}
    \bra
        \Sigma_z,
        E \cS_z E^{\rT}
    \ket
		\!=\!
		\left\{\!\!\!\!\!\!
        {\small\begin{array}{ll}				
		2\Re
        (
            \cH_{22}^* \Theta_2 - \Theta_2 \cH_{22}
        )
		&
		{\rm if}\ \ell=0		
		\\
        4
		\Re
        (
		z^{\ell}
        (\cH_{22}^* \Theta_2- \Theta_2 \cH_{22})
        )
		&
        {\rm if}\
		\ell > 0
		\end{array}}
		\right.\!\!\!\!, \! \! \!\!\!
	\end{equation}\noindent
where the symmetry of $R_{2,0}$ is taken into account. Upon integration according to (\ref{cE}), it follows from (\ref{dR3}) that 
\begin{equation}
\label{dEdR1}
		\d_{R_{2,\ell}}
    \cE 
		=
		\left\{
        {\small\begin{array}{ll}				
		2\Res_{z=0}
    \frac{\Re
        (
            \cH_{22}^* \Theta_2 - \Theta_2 \cH_{22}
        )}{z}
		&
		{\rm if}\ \ell=0		
		\\
        4
		\Res_{z=0}
        \frac{\Re
        (
		z^{\ell}
        (\cH_{22}^* \Theta_2- \Theta_2 \cH_{22})
        )}{z}
		&
        {\rm if}\
		\ell > 0
		\end{array}}
		\right..
	\end{equation}\noindent
Now, the plant-controller coupling matrix $\wt{R}_0$  influences the cost functional $\cE$ not only through the matrices  (\ref{tAA2}), (\ref{cAz}) and the ALE (\ref{cSz}), but also through the matrix $E$ in (\ref{XE}). Therefore, by appropriately modifying (\ref{dR1}), it follows that
\begin{equation}
\label{dR4}
		\!\delta_{\wt{R}_0}
    \bra
        \Sigma_z,\,
        E \cS_z E^{\rT}
    \ket
        \!=\!
		2\Re
        (
		\bra \cH_z  , \delta_{\wt{R}_0} \cA_z \ket
+
    \bra
        \Sigma_z E \cS_z,\,
        		\delta_{\wt{R}_0}E
    \ket),\!\!\!\!
\end{equation}\noindent	
which holds for any stabilizing  controller (with $g\in \mG_0$), where 
\begin{equation}
\label{dR5}
    \delta_{\wt{R}_0}
    \cA_z
    =
    2
    {\small\begin{bmatrix}
        0 & \Theta_1 \delta \wt{R}_0\\
        \Theta_2 \delta \wt{R}_0^{\rT} & 0
    \end{bmatrix}},
    \qquad
    \delta_{\wt{R}_0} E = {\small\begin{bmatrix}0 & 0 \\ 0 & \delta \wt{R}_0\end{bmatrix}}.    
\end{equation}
Substitution of (\ref{dR5}) into (\ref{dR4}) leads to the corresponding Fr\'{e}chet derivative of the integrand in (\ref{cE}):
$
		\d_{\wt{R}_0}
    \bra
        \Sigma_z,\,
        E \cS_z E^{\rT}
    \ket
    =
	4\Re				
	\big(
		\cH_{21}^* \Theta_2
                     -\Theta_1 \cH_{12}
		+
                \frac{1}{2}
		(						
		\Sigma_z
		E
		\cS_z
		)_{22}
	\big)
$, 
and hence, 
\begin{equation}
\label{dEdR2}
		\d_{\wt{R}_0}\cE
=
                4\Res_{z=0}
                \frac{
					\Re				
					(
						\cH_{21}^* \Theta_2
                         -\Theta_1 \cH_{12}
						+
                    \frac{1}{2}
						(						
						\Sigma_z
						E
						\cS_z
						)_{22}
					)}{z}.
\end{equation}
The necessary conditions of optimality (\ref{Nec_Rl}) and (\ref{Nec_Rtl}) can now be obtained by equating to zero the Fr\'{e}chet derivatives in (\ref{dEdR1}) and (\ref{dEdR2}), respectively.
\end{proof}
The optimality conditions (\ref{Nec_Rl}) and (\ref{Nec_Rtl}) provide a set of algebraic equations for finding an optimal controller in the problem (\ref{cEmin}). Although their solution is not yet available, the Fr\'{e}chet derivatives of the cost functional in (\ref{dEdR1}) and (\ref{dEdR2}) can be used, for example, in a gradient descent numerical algorithm for finding locally optimal stabilizing controllers, similar to \cite{SVP_GD_2015}.
\section{CONCLUSION} \label{sec:conclusion}
We have considered a decentralized coherent quantum control problem for a quantum plant  modelled as a large-scale one-dimensional translation invariant network governed by linear QSDEs with PBCs. The problem is to design a similar coherent quantum controller network, directly coupled to the plant, so as to minimize a weighted mean square functional in the thermodynamic limit.
We have used DFTs in order to achieve a more tractable form of the problem in the spatial frequency domain.   This reformulation has allowed necessary conditions of optimality to be obtained for a stabilizing controller. Using a similar approach, the results of the paper can be extended to translation invariant interconnections of linear quantum stochastic systems on higher dimensional lattices.

\end{document}